\documentclass[letterpaper, 10 pt, conference]{ieeeconf}

\usepackage{amsmath}

\usepackage{amsthm}
\usepackage{amsfonts}
\usepackage{bbm} 
\usepackage{graphicx}
\usepackage{subfigure}
\usepackage{cite}
\usepackage{textcomp}
\usepackage{url}
\usepackage{ amssymb }
\usepackage{color}

\newtheorem{defn}{Definition}
\newtheorem{thm}{Theorem}

\newtheorem{prop}{Proposition}
\newtheorem{cor}{Corollary}

\newtheorem{notation}{Notation}

\newtheorem{exam}{Example}
\newtheorem{assump}{Assumption}


\newcommand{\E}{\mathcal{E}}
\newcommand{\EE}{\mathbb{E}}
\newcommand{\V}{\mathbb{V}}
\newcommand{\G}{\mathcal{G}}
\newcommand{\Aut}{\mathrm{Aut}}
\newcommand{\HH}{\mathcal{H}}

\begin{document}

\title{Symmetry-Induced Clustering in Multi-Agent Systems using Network Optimization and Passivity}
\author{Miel Sharf and Daniel Zelazo
\thanks{M. Sharf and D. Zelazo are with the Faculty of Aerospace Engineering, Israel Institute of Technology, Haifa, Israel.
    {\tt\small msharf@tx.technion.ac.il, dzelazo@technion.ac.il}.  This work was supported by the German-Israeli Foundation for Scientific Research and Development.}
}

\maketitle
\begin{abstract}
This work studies the effects of a weak notion of symmetry on diffusively-coupled multi-agent systems. We focus on networks comprised of agents and controllers which are maximally equilibrium independent passive, and show that these converge to a clustered steady-state, with clusters corresponding to certain symmetries of the system. Namely, clusters are computed using the notion of the exchangeability graph. We then discuss homogeneous networks and the cluster synthesis problem, namely finding a graph and homogeneous controllers forcing the agents to cluster at prescribed values. 
\end{abstract}

\section{Introduction}\label{sec.Intro}
Multi-agent systems have been in the limelight of control research for the last decade due to their many applications in various areas of science and engineering, e.g. neuroscience \cite{Schnitzler2005}, biochemical systems \cite{Scardovi2010}, and robotics \cite{Chopra2006}.
Different approaches have been proposed for establishing a unified theory for multi-agent systems, and one of the most prominent schools of thought is modeling using tools from passivity theory \cite{Bai2011}. Studying multi-agent systems using passivity was first proposed by Arcak in \cite{Arcak2007}, but since then many variations have been explored, including incremental passivity \cite{Pavlov2008}, relaxed co-coercivity \cite{Stan2007}, and various notions of equilibrium-independent passivity frameworks \cite{Hines2011,Burger2014,Sharf2018a}.  

One important problem in the theory of multi-agent systems is the consensus problem. The consensus problem fixes a collection of agents, and tasks one to design a distributed control law forcing all agents to converge to the same output. The consensus problem has applications in almost all areas of multi-agent systems, including distributed computation \cite{Xiao2004}, robotics \cite{Chopra2006}, biochemical systems \cite{Scardovi2010}, etc. A generalization of the concensus problem is the clustering problem, in which the agents are divided into different groups (namely, clusters). The problem then tasks one to design a distributed control law forcing agents in the same cluster to synchronize, while agents in different clusters do not synchronize. The clustering problem is essential in fields like ecology \cite{Lewi2002}, neuroscience \cite{Schnitzler2005}, and biomimicry of swarms \cite{Passino2002}. Various methods have been used to study clustering, e.g. sturctural balance of the underlying graph \cite{Altafini2013}, pinning control \cite{Qin2013} and inter-cluster nonidentical inputs \cite{Han2013}.

We approach the clustering problem using symmetry. The notion of symmetry is one of the cornerstones of mathematics and physics. It is used in control theory extensively for many different applications. Examples include designing observers \cite{Bonnabel2009}, more efficient algorithms for model-predictive control \cite{Danielson2015}, and bipedal locomotion \cite{Spong2005}. In cooperative control, symmetry on the network level was used in \cite{Rahmani2007,Chapman2014,Chapman2015} to study controllability and observability. However, these works discuss network symmetries preserving the agents' models, which can be different even if the agents are equivalent. Moreover, the current literature about symmetries in multi-agent systems deals with symmetries in the trajectories of the agents, although consensus and clustering only require symmetries on the steady-state level. Our contributions are detailed below:
\begin{itemize}
\item We define the notion of a weak equivalence of systems, allowing us to define a more general, model-free notion of network symmetries on the steady-state level. We also define the weak automorphism group of a multi-agent system, and show that the set of steady-states is invariant under it.
\item We use the notion of maximal equilibrium independent passivity (MEIP), developed in \cite{Burger2014}, to show that under a proper passivity assumption, the output of the network must converge to a clustered steady-state, in which the clusters can be predicted using network symmetries.
\item Lastly, we discuss the case of homogeneous systems and the problem of cluster synthesis. We demonstrate a solution to the problem for a specific case. 
\end{itemize}   

The rest of this paper is organized as follows. Section \ref{sec.NetOpt} provides background on passivity and network optimization. Section \ref{sec.Symmetry} presents the paper's main results, followed by Section \ref{sec.Simulations} presenting two examples of the presented theory.

\paragraph*{Notations}
This work employs basic notions from algebraic graph theory \cite{Godsil2001}.  An undirected graph $\mathcal{G}=(\mathbb{V},\mathbb{E})$ consists of a finite set of vertices $\mathbb{V}$ and edges $\mathbb{E} \subset \mathbb{V} \times \mathbb{V}$.  We denote by $k=\{i,j\} \in \mathbb{E}$ the edge that has ends $i$ and $j$ in $\mathbb{V}$. For each edge $k$, we pick an arbitrary orientation and denote $k=(i,j)$ when $i\in \mathbb{V}$ is the \emph{head} of edge $k$ and $j \in \mathbb{V}$ the \emph{tail}.  The incidence matrix of $\mathcal{G}$, denoted $\mathcal{E}\in\mathbb{R}^{|\mathbb{E}|\times|\mathbb{V}|}$, is defined such that for edge $k=(i,j)\in \mathbb{E}$, $[\mathcal{E}]_{ik} =+1$, $[\mathcal{E}]_{jk} =-1$, and $[\mathcal{E}]_{\ell k} =0$ for $\ell \neq i,j$. For a graph $\G$, an automorphism of $\G$ is a permutation $\psi:\V\to\V$ such that $i$ is connected to $j$ if and only if $\psi(i)$ is connected to $\psi(j)$. We denote its automorphism group by $\Aut(\G)$. 

\section{Background: Passivity and Network Optimization for Multi Agent Systems}\label{sec.NetOpt}
The role of network optimization theory in cooperative control was introduced in \cite{Burger2014}, and was later developed in \cite{Sharf2017,Sharf2018a}. This section summarizes the main results of \cite{Burger2014}. 

\subsection{Maximally Monotone Dynamical Systems}
We consider SISO dynamical systems of the form:
\begin{align} \label{Upsilon}
\Upsilon :
\dot{x} = f(x,u), \;
y = h(x,u),
\end{align}
where $u \in \mathbb{R}$ is the input and $y \in \mathbb{R}$ is the output. Many variants of passivity were studied for such systems. We focus on two - \emph{equilibrium independent passivity} (EIP) and \emph{maximally equilibrium independent passivity} (MEIP). EIP was first introduced in \cite{Hines2011}. It requires the existence of an equilibrium input-output map, mapping constant steady-state inputs to constant steady-state outputs. It also requires passivity with respect to said input-output pairs. 

Another, more general, notion was introduced in \cite{Burger2014}. It was coined \emph{maximal equilibrium independent passive} (MEIP). As EIP, it requires passivity with respect to any steady-state input-output pair. Unlike EIP, it considers the collection of all pairs $(u_{ss},y_{ss})$ of steady-state inputs and outputs, denoted by $k_\Upsilon$. It gives rise to two set-valued maps - if $\mathrm u$ is a steady-state input and $\mathrm y$ is a steady-state output, we denote the steady-state outputs associated with $\mathrm u$ by $k_\Upsilon(\mathrm u)$, and the steady-state inputs associated with $\mathrm y$ by $k_\Upsilon^{-1}(\mathrm y)$. The image of these set valued maps can have more than one point, or no points at all. For example, if $\Upsilon$ is the single integrator $\dot{x}=u,\: y=x$, then $k_\Upsilon = \{(0,y):\: y\in\mathbb{R}\}$.  This is the main difference between EIP and MEIP.  In EIP,  the steady-state input-output maps are \emph{functions}, while for MEIP they are \emph{relations}

\begin{defn}[\small{Maximal Equilibrium Independent Passivity \cite{Burger2014}}]
Let $\Upsilon$ be as in \eqref{Upsilon}. The system $\Upsilon$ is \emph{maximally equilibrium independent monotonic (output-strictly) passive} if:
\begin{enumerate}
\item[i)] The system $\Upsilon$ is (output-strictly) passive with respect to any steady state $(u_{ss},y_{ss})\in k_\Upsilon$ \cite{Khalil2001}.
\item[ii)] The relation $k_\Upsilon$ is \emph{maximally monotone}, i.e., if $(u_1,y_1)(u_2,y_2)\in k_\Upsilon$ then \small$(u_1-u_2)(y_1-y_2)\ge 0$,\normalsize and $k_\Upsilon$ is not contained in a larger monotone relation \cite{Rockafeller1997}.
\end{enumerate}
\end{defn}
Such systems include (among others) single integrators, gradient systems, Hamiltonian systems on graphs, and others (see \cite{Burger2014,Sharf2017,Sharf2018a,Sharf2018b} for more examples). 

The interest in monotone relations stems from their connection to convex functions. A theorem by Rockafellar \cite{Rockafellar1966} states that maximal monotone relations are given by the subdifferential of a convex function $\mathbb{R}\rightarrow\mathbb{R}$, and vice versa. Furthermore, this correspondence is unique up to a constant added to the convex function. In particular, for MEIP systems, there is some convex function $K_\Upsilon$ such that the steady-stat relation $k_\Upsilon(u)$ is the subgradient $\partial K_\Upsilon(u)$. In \cite{Burger2014,Sharf2018a} this property was used to build a network optimization-based framework to find steady-states.
\subsection{Diffusively Coupled Networks}
In this subsection, we describe the structure of the network dynamical system studied in \cite{Burger2014}. We also present the connection between networked dynamical systems and network optimization theory.

Consider a collection of agents interacting over a network $\mathcal{G}=(\mathbb{V},\mathbb{E})$.  The nodes $i\in \V$ are assigned dynamical systems $\Sigma_i$, and the edges $e\in\mathbb{E}$ are assigned controllers $\Pi_e$, having the following form:
\begin{align} \label{Dynamics}
\Sigma_i: 
\begin{cases}
\dot{x}_i = f_i(x_i,u_i) \\
y_i = h_i(x_i,u_i),
\end{cases}
\Pi_e: 
\begin{cases}
\dot{\eta}_e= \phi_e(\eta_e,\zeta_e) \\
\mu_e = \psi_e(\eta_e,\zeta_e)
\end{cases}.
\end{align}

We consider stacked vectors of the form $u=[u_1^T,...,u_{|\mathbb{V}|}^T]^T$ and similarly for $y,\zeta$ and $\mu$.  
The network system is diffusively coupled with the controller input described by $\zeta = \mathcal{E}^Ty$, and the control input to each system by $u = -\mathcal{E}\mu$, where $\E$ is an incidence matrix of the graph $\G$.  This structure is denoted by the triplet $(\G,\Sigma,\Pi)$, and is illustrated in Fig. \ref{ClosedLoop}. For the rest of this paper, we will assume one of the following two alternatives.  If this is not the case, see \cite{Jain2018} and \cite{Jain2019} for plant augmentation techniques.

\begin{assump}\label{assump.1}
The agents $\Sigma_i$ are output-striclty MEIP and the controllers $\Pi_e$ are MEIP.
\end{assump}

\begin{assump}\label{assump.2}
The agents $\Sigma_i$ are MEIP and the controllers $\Pi_e$ are output-strictly MEIP.
\end{assump}

\begin{figure} [!t] 
    \centering
    \includegraphics[scale=0.65]{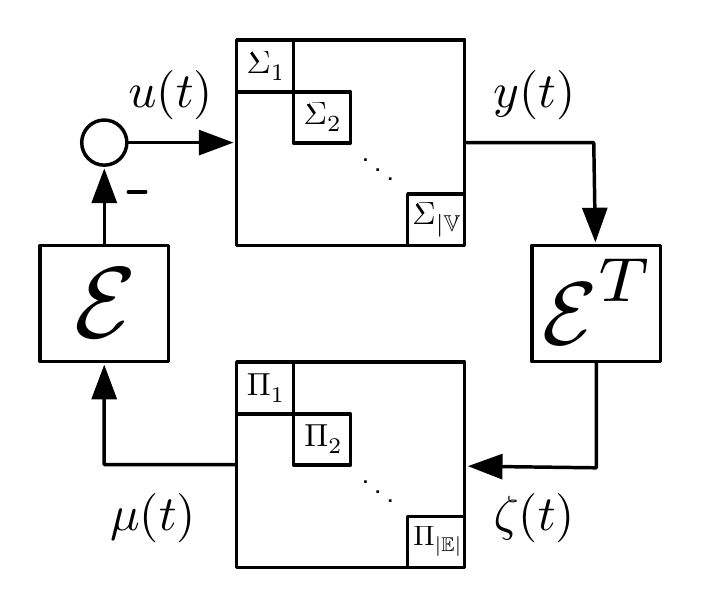}
    
    \caption{Block-diagram of the closed loop.}
    \label{ClosedLoop}
    
\end{figure}

We denote the steady-state input-output relations of the node $i$ and the edge $e$ by $k_i$ and $\gamma_e$, respectively. 
Owing to Rockafellar's result, we take convex functions $K_i(u_i)$ and $\Gamma_k(\zeta_k)$  such that $\partial K_i(u_i) = k_i$ and $\partial \Gamma_k(\zeta_k) = \gamma_k$. We consider the stacked relations $k(u)$ and $\gamma(\zeta)$ by concatenating the $k_i(u_i)$'s and $\gamma_k(\zeta_k)$'s respectively. We also define the convex functions $K(u)=\sum_{i\in\mathbb{V}} K_i(u_i)$ and $\Gamma(\zeta) = \sum_{k\in\mathbb{E}} \Gamma_k(\zeta_k)$. It is straightforward to check that $\partial K(u)=k(u)$ and $\partial \Gamma(\zeta) = \gamma(\zeta)$.

In order to state the main theorem, we introduce the notion of the dual function. The dual function of $K$ is defined by $K^\star(y) = \min_{u}\{y^Tu - K(u)\}$ \cite{Rockafeller1997}. It is also a convex function, and it possesses the property that $\partial K^\star(y) = k^{-1}(y)$. One can similarly define the convex dual $\Gamma^\star(\mu)$ of $\Gamma$. 
We are now ready to state the main result from \cite{Burger2014}.
\begin{thm}[\cite{Burger2014,Sharf2017}]\label{thm.Analysis}
Consider the diffusively-coupled system $(\G,\Sigma,\Pi)$, and assume either Assumption \ref{assump.1} or \ref{assump.2} holds. Then the signals $u(t),y(t),\zeta(t),\mu(t)$ of the closed-loop system converge to some steady-state values $\hat{u},\hat{y},\hat{\zeta},\hat{\mu}$. These values are the (primal-dual) solutions of the following pair of convex optimization problems:
\begin{center}
\begin{tabular}{ c||c }
 \textbf{Optimal Potential Problem} & \textbf{Optimal Flow Problem}  \\ \hline   \\
 $ \begin{array}{cl} \underset{y,\zeta}{\min} &K^\star(y) + \Gamma(\zeta)\\
s.t.&\mathcal{E}^Ty = \zeta 
\end{array} $&  $ \begin{array}{cl}\underset{u,\mu}{\min}& K(u) + \Gamma^\star(\mu) \\
s.t. &\mu = -\mathcal{E}u.
\end{array} $ 
\end{tabular}
\end{center}
\end{thm}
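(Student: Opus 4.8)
The plan is to split the argument into two independent parts: an algebraic identification of the closed-loop equilibria with the optimizers of the two programs via their optimality conditions, and a passivity-based Lyapunov argument showing that the trajectories actually reach such an equilibrium.

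For the identification step, I would first write down what it means for $(\hat u,\hat y,\hat\zeta,\hat\mu)$ to be a steady-state of $(\mathcal{G},\Sigma,\Pi)$: the agent and controller relations must hold, $\hat y\in k(\hat u)$ and $\hat\mu\in\gamma(\hat\zeta)$, together with the diffusive coupling $\hat\zeta=\mathcal{E}^T\hat y$ and $\hat u=-\mathcal{E}\hat\mu$. Using Rockafellar's correspondence, I would rewrite the first inclusion in dual subdifferential form as $\hat u\in\partial K^\star(\hat y)$ and the second as $\hat\mu\in\partial\Gamma(\hat\zeta)$. I would then form the Lagrangian of the Optimal Potential Problem, attaching a multiplier to the constraint $\mathcal{E}^Ty=\zeta$, and write out its stationarity and primal-feasibility conditions. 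Identifying the multiplier with $\hat\mu$, stationarity in $y$ yields $-\mathcal{E}\hat\mu\in\partial K^\star(\hat y)$, i.e. $\hat u=-\mathcal{E}\hat\mu$ with $\hat y\in k(\hat u)$, while stationarity in $\zeta$ yields $\hat\mu\in\partial\Gamma(\hat\zeta)$; these reproduce exactly the four equilibrium relations. Since $K^\star+\Gamma$ is convex and the constraint is linear, the KKT conditions are necessary and sufficient, so equilibria coincide with minimizers. The Optimal Flow Problem is the Fenchel--Lagrange dual, so its minimizers pair with the same multiplier data, giving the asserted primal--dual correspondence.

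For the convergence step, I would fix one equilibrium $(\hat u,\hat y,\hat\zeta,\hat\mu)$, whose existence follows from feasibility and lower-boundedness of the convex program, and build the storage function $S=\sum_{i}S_i+\sum_{e}W_e$, where $S_i,W_e$ are the MEIP storage functions of the agents and controllers taken about that steady-state pair. Passivity then gives $\dot S\le (u-\hat u)^T(y-\hat y)+(\zeta-\hat\zeta)^T(\mu-\hat\mu)$ minus the output-strict terms. The crucial cancellation is structural: substituting $u-\hat u=-\mathcal{E}(\mu-\hat\mu)$ and $\zeta-\hat\zeta=\mathcal{E}^T(y-\hat y)$, the two bilinear terms become negatives of each other and sum to zero, so $\dot S\le 0$, with equality forcing the remaining output-strict terms to vanish.

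Finally I would invoke LaSalle's invariance principle: trajectories converge to the largest invariant set on which the strict terms are zero. Under Assumption \ref{assump.1} this pins down $y\to\hat y$, and under Assumption \ref{assump.2} it pins down $\mu\to\hat\mu$; in either case the monotonicity of the remaining relation together with the coupling equations propagates convergence to all four signals and identifies the limit as an optimizer. I expect the genuine difficulty to lie here rather than in the identification: one must justify existence of a minimizer, handle the set-valuedness of the relations inside a nonsmooth invariance argument, and argue that \emph{one-sided} output-strictness suffices to collapse the invariant set onto the equilibrium set. The KKT matching, by contrast, is essentially bookkeeping once the coupling is expressed in subdifferential form.
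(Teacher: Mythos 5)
The paper does not prove this theorem itself---it is imported as background and attributed to \cite{Burger2014,Sharf2017}---so there is no in-paper proof to compare against; your outline (equilibria of the closed loop matched to the KKT conditions of the Optimal Potential Problem via Rockafellar's subdifferential correspondence, the Optimal Flow Problem obtained as its Fenchel dual, and convergence via the summed MEIP storage functions, the structural cancellation $(u-\hat u)^T(y-\hat y)+(\zeta-\hat\zeta)^T(\mu-\hat\mu)=0$ from the incidence-matrix coupling, and LaSalle) is essentially the argument given in those references. You also correctly locate the real technical work in the convergence half (existence of a minimizer, set-valuedness, and how one-sided output-strictness collapses the invariant set), so no gap to flag.
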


Lastly, in some cases, we make the following assumption:
\begin{assump}\label{assump.3}
The equality $\gamma_e(-x) = -\gamma_e(x)$ holds for every $x$ and every $e\in \EE$.
\end{assump}
This assumption implies that the steady-state of the closed-loop system does not depend on the choice of orientation.  Indeed, it implies that $\Gamma_e(x) = \Gamma_e(-x)$, and the result now follows from Theorem \ref{thm.Analysis}. For example, this assumption holds of all of the controllers are static nonlinearities which are odd functions. As will be shown later, this assumption helps us achieve consensus, but bars us from achieving clustering.
\vspace{-2pt}
\section{Symmetry and Clustering\\ in Multi-Agent Systems}\label{sec.Symmetry}
In this section, we develop the theory for symmetries in Multi-agent system. We begin with a brief overview about the role of symmetry in control and in multi-agent systems, 
\subsection{The Weak Automorphism Group of a Multi-Agent System}
As stated in the introduction, symmetries have been used in the study of control laws for many systems \cite{Bonnabel2009,Danielson2015,Spong2005}. In cooperative control, symmetries were used in the study of controllability and observability \cite{Rahmani2007,Chapman2014,Chapman2015}. Namely, in \cite{Rahmani2007}, it is shown that if we have a weighted graph $\G=(\mathbb{W,E,V})$ and input nodes $S \subset \mathbb{V}$, then the controlled consensus system $\dot{x} = -L(\G)x + Bu$, where $L(\G)$ is the graph Laplacian and $B$ is supported on $S$, is uncontrollable, as long as there exists a nontrivial graph automorphism $\psi\in \Aut(\G)$ such that $P_\psi$ commutes with $L(\G)$ and $P_\psi B = B$. Later, \cite{Chapman2014} expended this idea to ``Fractional Automorphism", using the inherent linearity of the system.

Pushing this idea a step further, we want to consider more general systems. A first step is the case of linear systems. If we try and mimic \cite{Chapman2014}, then we require that the symmetry matrix $P_\psi$, which corresponds to some permutation, commutes with the dynamics matrix $A$ of the entire system. This has a few drawbacks - The main one is that this is extremely model-dependent, i.e., different matrices $A$ might yield different symmetries, even though the agents are equivalent. Specifically, on a two-vertex graph with one edge, where both agents have the same dynamics, but different realizations of the model, the graph automorphism exchanging the vertices is not a symmetry.

One possible direction to remove this problem is to try and use model-independent sizes, like the transfer function for an LTI system.  However, we take a different path, as we mostly care about the steady-state limit for clustering. We first define the notion of weak equivalence between dynamical systems.
\begin{defn}
Two dynamical systems $\Upsilon_1$,$\Upsilon_2$ are called \emph{weakly equivalent} if their steady-state input-output relations are identical.
\end{defn}

\begin{exam} \label{exam.WeakEquiv}
Consider the following dynamical systems:
\small
\begin{align*}
\Upsilon_1&:\ y = u
&\Upsilon_2&: \begin{cases} \dot{x} = -x+u , \\ y = x  \end{cases}& \\
\Upsilon_3&: \begin{cases} \dot{x} = -10x+u , \\ y = 10x  \end{cases}
&\Upsilon_4&: \begin{cases} \dot{x} = -\tanh(x)+u , \\ y = \tanh(x)  \end{cases}& \\
\Upsilon_5&: \begin{cases} \dot{x} = -x+\sinh(u) , \\ y = \mathrm{arcsinh}(x)  \end{cases}
&\Upsilon_6&: \begin{cases} \dot{x} = -x+u , \\ y = 0.5(x+u)  \end{cases}&
\end{align*} \normalsize
These systems are vastly different from one another. One is memoryless, while the others are not. Some are LTI, and some are nonlinear. Of the nonlinear ones, one is input-affine nonlinear, while the other is not. All are output-strictly passive, but only $\Upsilon_6$ is input-strictly passive. However, all of these systems have the steady-state input-output relation $k(\mathrm u) = \mathrm u$, meaning that they are weakly equivalent. 
\end{exam}

\begin{defn}
Let $(\G,\Sigma,\Pi)$ be any multi-agent system for SISO agents. A \emph{weak automorphism} is a map $\psi:\mathbb{V}\to\mathbb{V}$ such that the following conditions hold:
\begin{itemize}
\item The map $\psi$ is an automorphism of the graph $\G$.
\item For any $i\in\mathbb{V}$, $\Sigma_i$ and $\Sigma_{\psi(i)}$ are weakly equivalent.
\item For any $e \in\mathbb{E}$, $\Pi_e$ and $\Pi_{\psi(e)}$ are weakly equivalent.
\item Moreover, if Assumption \ref{assump.3} does not hold, we demand that the map $\psi$ preserves edge orientation. 
\end{itemize}
We denote the collection of all weak automorphisms of the diffusively-coupled system $(\G,\Sigma,\Pi)$ by $\Aut(\G,\Sigma,\Pi)$. Naturally, this is a subgroup of the group of automorphisms $\Aut(\G)$ of the graph $\G$.
\end{defn}

The name ``weak" automorphism hints at the existence of a ``strong" automorphism. That would be an automorphism sending agents and controllers to agents and controllers having the same dynamics (e.g. that can be modeled using the same model). We shall not focus on that notion in this paper. 

\begin{notation}
Each permutation $\psi:\V\to\V$ defines a linear map $\mathbb{R}^{|\V|}\to\mathbb{R}^{|\V|}$ by permuting the coordinates according to $\psi$. We denote the linear operator by $P_\psi$. If $\psi$ is a graph automorphism for the graph $\G=(\V,\EE)$, then it gives rise to a permutation $\EE\to\EE$ on the edges. We denote the corresponding linear map $\mathbb{R}^{|\EE|}\to\mathbb{R}^{|\EE|}$ by $Q_\psi$. Namely, $(P_\psi)_{ij} = \delta_{ij}$ and $(Q_\psi)_{ef} = \delta_{ef}$ for $i,j\in \V$ and $e,f\in\EE$. 
\end{notation}

\begin{prop} \label{prop.Commutation}
For any graph $\G=(\V,\EE)$, and for any weak automorphism $\psi$, we have $P_\psi \E = \E Q_{\psi} D$ for some diagonal matrix $D$ with $\pm 1$ entries. Moreover, if $\psi$ preserves edge orientation, then $D = \mathrm{Id}_{|\V|}$ is the identity matrix. 
\end{prop}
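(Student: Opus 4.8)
The plan is to prove the identity one column at a time, using only the fact that a weak automorphism is in particular a graph automorphism of $\G$; the agent and controller data are irrelevant to this purely combinatorial statement. Recall that the column of $\E$ indexed by an edge $k=(i,j)$ (head $i$, tail $j$) is the vector $\chi_i - \chi_j$, where $\chi_\ell$ denotes the $\ell$-th standard basis vector of $\mathbb{R}^{|\V|}$. Since $\psi\in\Aut(\G)$, it sends the edge $k=\{i,j\}$ to an edge $\psi(k)$ with endpoints $\psi(i)$ and $\psi(j)$, and this induced edge permutation is what $Q_\psi$ realizes.

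First I would compute the left-hand side: because $P_\psi$ acts on vertex basis vectors by $P_\psi\chi_\ell = \chi_{\psi(\ell)}$, applying it to the $k$-th column gives $P_\psi(\chi_i-\chi_j) = \chi_{\psi(i)}-\chi_{\psi(j)}$. Next, the $k$-th column of $\E Q_\psi$ is the column of $\E$ indexed by the image edge $\psi(k)$. Writing the a-priori-fixed orientation of $\psi(k)$ as $\psi(k)=(a,b)$, this column equals $\chi_a-\chi_b$, and since $\{a,b\}=\{\psi(i),\psi(j)\}$ we are in exactly one of two cases: either $\chi_a-\chi_b = \chi_{\psi(i)}-\chi_{\psi(j)}$ when the orientations agree, or $\chi_a-\chi_b = -(\chi_{\psi(i)}-\chi_{\psi(j)})$ when they are reversed.

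Comparing the two computations, the $k$-th columns of $P_\psi\E$ and of $\E Q_\psi$ coincide up to a sign $d_k\in\{+1,-1\}$, with $d_k=+1$ precisely when $\psi$ carries the head and tail of $k$ to the head and tail of $\psi(k)$. Collecting these signs into the diagonal matrix $D$ with $D_{kk}=d_k$ turns the per-column relation into $P_\psi\E = \E Q_\psi D$, which is the claim. If $\psi$ preserves edge orientation then every $d_k=+1$, so $D=\mathrm{Id}$ is the identity, giving the final assertion.

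The one delicate point, and the reason the matrix $D$ is needed at all, is the orientation bookkeeping: each edge is oriented by an arbitrary a-priori choice, so $\psi$ need not map the chosen head of $k$ to the chosen head of $\psi(k)$, and the sign $d_k$ records exactly this mismatch. A secondary detail is to pin down the convention for $Q_\psi$ (whether it realizes $\psi$ or $\psi^{-1}$ on edge indices) so that the $k$-th column of $\E Q_\psi$ is genuinely the column indexed by $\psi(k)$; under the opposite convention one relabels edges and reaches the same conclusion.
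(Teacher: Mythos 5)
Your proof is correct and follows essentially the same route as the paper: a direct verification that the columns (the paper works entrywise rather than columnwise) of $P_\psi\E$ and $\E Q_\psi$ agree up to a sign determined by whether $\psi$ respects the arbitrary edge orientations, with $D$ absorbing those signs. Your closing remark about the convention for $Q_\psi$ is apt — the paper's definition as written makes the $e$-th column of $\E Q_\psi$ the column of $\E$ indexed by $\psi^{-1}(e)$ rather than $\psi(e)$ — but as you note this only relabels the edges and does not affect the conclusion.
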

\begin{proof}
Indeed, for every $i\in\V$ and $e\in\EE$,\small
\begin{align*}
[P_\psi\E]_{ie} &= \sum_{k\in\V} (P_\psi)_{ik}\E_{ke} = \sum_{k\in\V} \delta_{\psi(i)k}\E_{ke} = \E_{\psi(i),e} \\
[\E Q_{\psi}D]_{ie} &= \sum_{f\in\EE} \E_{if}(Q_{\psi})_{fe}D_{ee} =  \sum_{f\in\EE} \E_{if}\delta_{\psi(f)e}D_{ee} \\
&= \E_{i\psi^{-1}(e)}D_{ee}.
\end{align*}\normalsize
Thus, because $\psi(i) \in e$ if and only if $i \in \psi^{-1}(e)$, the entries of $\E$ are the same up to sign, which can be fixed by the matrix $D$. Moreover, if $\psi$ preserves edge orientations, then the signs are the same and $D=\mathrm{Id}_{|\V|}$.
\end{proof}

\subsection{Steady-State Clustering in Multi-Agent Systems} 

In this section, we wish to build a connection between the group action $\Aut(\G) \curvearrowright \G$ and the symmetries in the steady-state $\mathrm y$ of $(\G,\Sigma,\Pi)$. We start with the following proposition.
\begin{prop} \label{prop.Invariance}
The function $F(\mathrm y) = K^{\star}(\mathrm y) + \Gamma(\E^T\mathrm y)$ is $\Aut(\G,\Sigma,\Pi)$-invariant. In other words, $F(P_\psi \mathrm y) = F(\mathrm y)$ for any $y\in \mathbb{R}^{|\V|}$ and $\psi\in\Aut(\G,\Sigma,\Pi)$. 
\end{prop}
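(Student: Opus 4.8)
The plan is to show that $F(\mathrm{y}) = K^\star(\mathrm{y}) + \Gamma(\E^T \mathrm{y})$ is invariant under the action $\mathrm{y} \mapsto P_\psi \mathrm{y}$ by treating the two summands separately. The overall structure exploits the fact that a weak automorphism permutes the agents and edges while preserving the relevant steady-state relations, so the corresponding convex functions are merely permuted, and summing over all nodes (resp. edges) washes out the permutation.

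First I would handle the term $K^\star(\mathrm{y})$. Since $K(u) = \sum_{i \in \V} K_i(u_i)$, its convex dual decomposes as $K^\star(\mathrm{y}) = \sum_{i \in \V} K_i^\star(\mathrm{y}_i)$. The key observation is that weak equivalence of $\Sigma_i$ and $\Sigma_{\psi(i)}$ means $k_i = k_{\psi(i)}$ as steady-state relations; by Rockafellar's correspondence the generating convex functions agree up to an additive constant, so $K_i^\star$ and $K_{\psi(i)}^\star$ agree up to a constant. Evaluating $K^\star(P_\psi \mathrm{y}) = \sum_i K_i^\star(\mathrm{y}_{\psi^{-1}(i)}) = \sum_i K_{\psi(i)}^\star(\mathrm{y}_i)$ after reindexing, and since $\psi$ is a bijection on $\V$, this sum equals $\sum_i K_i^\star(\mathrm{y}_i) = K^\star(\mathrm{y})$, with the additive constants summing to a fixed constant that I would need to argue is irrelevant (or absorb into the freedom in choosing the $K_i$).

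Next I would treat $\Gamma(\E^T \mathrm{y})$. Here Proposition \ref{prop.Commutation} is the essential tool: it gives $P_\psi \E = \E Q_\psi D$, hence $\E^T P_\psi^T = D Q_\psi^T \E^T$, and since $P_\psi$ is orthogonal ($P_\psi^T = P_\psi^{-1}$) one obtains $\E^T P_\psi = D^{-1} Q_\psi^{-1} \E^T$ up to transposes and inverses of permutation/sign matrices. The computation yields $\E^T (P_\psi \mathrm{y}) = D Q_\psi (\E^T \mathrm{y})$ (modulo careful bookkeeping of which matrices are inverted), so $\Gamma(\E^T P_\psi \mathrm{y}) = \Gamma(D Q_\psi \E^T \mathrm{y})$. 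I would then expand $\Gamma(\zeta) = \sum_{e \in \EE} \Gamma_e(\zeta_e)$, use weak equivalence of $\Pi_e$ and $\Pi_{\psi(e)}$ to match $\Gamma_e$ with $\Gamma_{\psi(e)}$, and use the $Q_\psi$ permutation to reindex the sum over edges back to itself.

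The main obstacle will be the sign matrix $D$ and Assumption \ref{assump.3}. When $\psi$ does not preserve edge orientation, $D \neq \Id$, so the argument of $\Gamma_e$ may pick up a sign flip; this is precisely where Assumption \ref{assump.3} (giving $\Gamma_e(x) = \Gamma_e(-x)$, noted in the text) is needed to absorb the $\pm 1$ entries of $D$. Conversely, the definition of weak automorphism demands orientation preservation exactly when Assumption \ref{assump.3} fails, forcing $D = \Id$ by Proposition \ref{prop.Commutation}. So I would split into these two cases and verify that in each the $\Gamma$ term is invariant. The secondary subtlety is the additive constants from the nonuniqueness in Rockafellar's correspondence; I expect these to contribute only a global constant independent of $\mathrm{y}$, which I would argue can be normalized away so that $F$ is genuinely invariant rather than invariant up to a constant.
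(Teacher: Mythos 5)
Your proposal is correct and follows essentially the same route as the paper: identify $K_i$ with $K_{\psi(i)}$ and $\Gamma_e$ with $\Gamma_{\psi(e)}$ via weak equivalence, commute $P_\psi$ past $\E^T$ using Proposition~\ref{prop.Commutation}, absorb the sign matrix $D$ either through Assumption~\ref{assump.3} or through orientation preservation, and reindex the sums over $\V$ and $\EE$. Your extra care about the additive-constant ambiguity in Rockafellar's correspondence (resolved by fixing the same potential for identical relations) is a point the paper's proof silently assumes, so it is a welcome but not essential refinement.
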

\begin{proof}
We first note that $K_i = K_{\psi(i)}$ and $\Gamma_e = \Gamma_{\psi(e)}$, as $k_i = k_{\psi(i)}$ and $\gamma_e = \gamma_{\psi(e)}$. Thus,
\begin{align*}
&K(P_\psi \mathrm y) = \sum_{i\in\V} K_i((P_\psi\mathrm y)_i) = \sum_{i\in\V} K_i(\mathrm y_{\psi(i)}) = \\ &\sum_{i\in\V} K_{\psi(i)}(\mathrm y_{\psi(i)}) =  \sum_{j\in\V} K_j((\mathrm y)_j) = K(\mathrm y),
\end{align*}
where we use the switch $j = \psi(i)$ and the fact that $\psi : V\to V$ is a bijection. Similarly, due to Proposition \ref{prop.Commutation}, one has
\begin{align*}
\Gamma(\E^T P_\psi \mathrm y) = \Gamma(D^TQ_\psi\E^T\mathrm y) = \sum_{e \in \EE} \Gamma_e(D_{ee} (Q_\psi\E^T\mathrm y)_e).
\end{align*}
If Assumption \ref{assump.3} holds, then $\Gamma_e(x) = \Gamma_e(-x)$, so because $D_{ee} \in \{\pm 1\}$, we can remove that term in the product. Otherwise, $D_{ee} = 1$. In any case, we get that the last expression is equal to \small
\begin{align*}
&\sum_{e\in \EE} \Gamma_e((Q_\psi\E^T\mathrm y)_e) = \sum_{e\in \EE } \Gamma_e((\E^T\mathrm y)_{\psi(e)}) = \\ &\sum_{e\in\EE} \Gamma_{\psi(e)}((Q_\psi\E^T\mathrm y)_{\psi(e)}) =  \sum_{k\in\EE} \Gamma_k((\E^T\mathrm y)_k) = \Gamma(\E^T\mathrm y),
\end{align*}\normalsize
where we switch $k=\psi(e)$. This completes the proof.
\end{proof}

\begin{cor}
Suppose that $(\G,\Sigma,\Pi)$ is a diffusively coupled network satisfying either Assumption \ref{assump.1} or \ref{assump.2}. Then the set of steady-state outputs for the diffusively coupled network $(\G,\Sigma,\Pi)$ is $\Aut(\G,\Sigma,\Pi)$-invariant, i.e., it is preserved when applying $P_\psi$-s for $\psi \in \Aut(\G,\Sigma,\Pi)$.
\end{cor}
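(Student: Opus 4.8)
The plan is to combine Theorem~\ref{thm.Analysis} with the invariance established in Proposition~\ref{prop.Invariance}, using the elementary fact that a symmetry of an objective function permutes the set of its minimizers. No new analysis is needed; the work is entirely in connecting the two earlier results.

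First I would reduce the Optimal Potential Problem to an unconstrained minimization in the single variable $\mathrm y$. The constraint $\zeta = \E^T \mathrm y$ lets us substitute $\zeta$ out of the objective, so that the steady-state outputs $\hat{\mathrm y}$ guaranteed by Theorem~\ref{thm.Analysis} are exactly the minimizers of $F(\mathrm y) = K^\star(\mathrm y) + \Gamma(\E^T \mathrm y)$ over $\mathbb{R}^{|\V|}$. In particular, under Assumption~\ref{assump.1} or~\ref{assump.2}, the set of steady-state outputs of $(\G,\Sigma,\Pi)$ coincides with $\arg\min_{\mathrm y} F(\mathrm y)$.

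Next, I would invoke Proposition~\ref{prop.Invariance}: for every $\psi \in \Aut(\G,\Sigma,\Pi)$ and every $\mathrm y$ we have $F(P_\psi \mathrm y) = F(\mathrm y)$. Then for any minimizer $\hat{\mathrm y}$ of $F$ and any $\psi \in \Aut(\G,\Sigma,\Pi)$, the point $P_\psi \hat{\mathrm y}$ satisfies $F(P_\psi \hat{\mathrm y}) = F(\hat{\mathrm y}) = \min_{\mathrm y} F(\mathrm y)$, so $P_\psi \hat{\mathrm y}$ is again a minimizer. Since $P_\psi$ is invertible, being a permutation matrix, it maps the minimizer set bijectively onto itself, which is precisely the claimed $\Aut(\G,\Sigma,\Pi)$-invariance of the steady-state output set.

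I expect the only subtle point to be the identification of the steady-state output set with $\arg\min F$ rather than merely a subset of it; this requires that Theorem~\ref{thm.Analysis} characterize the steady-states exactly as the optimal solutions of the Optimal Potential Problem, and that eliminating the equality constraint preserve the optimizer set. Both hold here. Everything else is the routine observation that a transformation leaving an objective invariant must permute its set of minimizers.
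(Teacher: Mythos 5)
Your argument is exactly the paper's: the paper's proof is the one-liner ``Immediate from Theorem~\ref{thm.Analysis} and Proposition~\ref{prop.Invariance},'' and your proposal simply spells out that combination --- identifying the steady-state outputs with the minimizers of $F(\mathrm y)=K^\star(\mathrm y)+\Gamma(\E^T\mathrm y)$ and noting that an invertible transformation preserving $F$ permutes its minimizer set. The proposal is correct and takes essentially the same route.
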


\begin{proof}
Immediate from Theorem \ref{thm.Analysis} and Proposition \ref{prop.Invariance}.
\end{proof}

Up to now, we showed that if $\mathrm y$ is a possible steady-state output of the diffusively coupled network $(\G,\Sigma,\Pi)$, for some initial condition, then $P_\psi \mathrm y$ is also a possible steady-state output of the network, for some (maybe different) initial condition. We want to push the envelope and show that, actually, $P_\psi \mathrm y = \mathrm y$. Our main tool is strong convexity. 

\begin{thm} \label{thm.Symmetry}
Consider the diffusively-coupled system $(\G,\Sigma,\Pi)$, and suppose that either Assumption \ref{assump.1} or Assumption \ref{assump.2} hold. Then for any steady-state $\mathrm y$ of the closed-loop and any weak automorphism $\psi\in\Aut(\G,\Sigma,\Pi)$, $P_\psi \mathrm y = \mathrm y$.
\end{thm}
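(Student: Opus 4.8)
The plan is to combine the variational characterization of Theorem \ref{thm.Analysis} with the invariance from Proposition \ref{prop.Invariance}, and then to upgrade ``invariance of the set of minimizers'' to ``each minimizer is fixed'' using strong convexity. The passivity hypotheses enter precisely here: output-strict passivity of a subsystem with respect to \emph{all} of its steady states means that for any two steady-state pairs $(u_1,y_1),(u_2,y_2)$ one has $(u_1-u_2)(y_1-y_2)\ge \rho (y_1-y_2)^2$, which is exactly $\rho$-strong monotonicity of the inverse steady-state relation, i.e. $\rho$-strong convexity of the associated dual potential. Thus under Assumption \ref{assump.1} the function $K^\star$ is strongly convex, while under Assumption \ref{assump.2} the function $\Gamma^\star$ is strongly convex. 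In either case, by Theorem \ref{thm.Analysis} every steady-state output $\mathrm y$ minimizes $F(y)=K^\star(y)+\Gamma(\E^T y)$, and by Proposition \ref{prop.Invariance} we have $F(P_\psi \mathrm y)=F(\mathrm y)$, so $P_\psi\mathrm y$ is a minimizer as well; the goal reduces to showing the minimizer is $P_\psi$-fixed.

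First I would treat Assumption \ref{assump.1}. Since $K^\star$ is strongly convex and $y\mapsto\Gamma(\E^T y)$ is convex, $F$ is strongly convex and has a unique minimizer, forcing $P_\psi\mathrm y=\mathrm y$. Concretely, writing $\bar y=\tfrac12(\mathrm y+P_\psi\mathrm y)$, strong convexity of $K^\star$ with convexity of $\Gamma(\E^T\cdot)$ gives $F(\bar y)\le F^\star-\tfrac{m}{8}\|\mathrm y-P_\psi\mathrm y\|^2$, and since $\bar y$ cannot beat the optimum we conclude $\mathrm y=P_\psi\mathrm y$.

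Next, Assumption \ref{assump.2}. Here $F$ need not be strongly convex, so I would pass to the dual (optimal flow) problem of Theorem \ref{thm.Analysis}: eliminating the linearly-constrained agent variable, the flow $\mu$ minimizes a sum of a convex term and the strongly convex $\Gamma^\star$, hence is the unique minimizer of a strongly convex function. The flow objective is $\Aut(\G,\Sigma,\Pi)$-invariant by the computation of Proposition \ref{prop.Invariance} applied to $K$ and $\Gamma^\star$, now invoking Proposition \ref{prop.Commutation} on the edge side, so the unique optimal flow $\hat\mu$ is fixed by the induced edge permutation. This pins down the tension $\E^T\mathrm y$ through $\E^T\mathrm y\in\partial\Gamma^\star(\hat\mu)$; once $\E^T\mathrm y$ is known to be $P_\psi$-invariant, I would finish with a kernel argument, namely $\E^T(\mathrm y-P_\psi\mathrm y)=0$ so $\mathrm y-P_\psi\mathrm y\in\ker\E^T$ (the consensus subspace on each connected component), and then use that $P_\psi$ is a permutation and preserves the total sum on each component to force the constant to vanish, giving $P_\psi\mathrm y=\mathrm y$.

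The main obstacle is exactly this last transfer under Assumption \ref{assump.2}: strong convexity of $\Gamma^\star$ controls the flow $\mu$, not the output $\mathrm y$ directly, and the correspondence $\hat\mu\mapsto\E^T\mathrm y$ is single-valued only when $\Gamma^\star$ is differentiable at $\hat\mu$, i.e. when the edge relations $\gamma_e$ are strictly monotone. If some $\gamma_e$ has a flat segment then $\partial\Gamma^\star(\hat\mu)$ is an interval, $\E^T\mathrm y$ is not uniquely determined, and the final kernel step loses its footing. I would therefore expect the $K^\star$-strongly-convex case (Assumption \ref{assump.1}) to be essentially immediate, and to spend the bulk of the effort in the second case either invoking strict monotonicity of the steady-state relations or arguing the equality-in-convexity conditions at the midpoint carefully enough to propagate uniqueness from the flow variable back to the output.
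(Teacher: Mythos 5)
Your treatment of Assumption \ref{assump.1} is essentially the paper's: output-strict passivity of the agents makes $K^\star$ strictly (indeed strongly) convex, so $F$ has a unique minimizer and both $\mathrm y$ and $P_\psi\mathrm y$ must be it. For Assumption \ref{assump.2}, however, you take a detour through the optimal flow problem that the paper does not need, and you have correctly diagnosed the resulting gap yourself: uniqueness of the optimal flow $\hat\mu$ transfers to $\E^T\mathrm y$ only when $\partial\Gamma^\star(\hat\mu)$ is a singleton, i.e.\ when the $\gamma_e$ are strictly monotone, and your argument as written does not establish this. The missing ingredient is exactly the fact the paper invokes from \cite{Burger2014}: output-strictly MEIP systems have \emph{strictly monotone} steady-state relations, so under Assumption \ref{assump.2} each $\Gamma_e$ is strictly convex --- not merely $\Gamma_e^\star$ strongly convex, which is the weaker (co-coercivity) conclusion you extracted from the passivity inequality. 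Without that fact the step ``$\hat\mu$ pins down $\E^T\mathrm y$'' genuinely fails (e.g.\ for a controller whose steady-state characteristic has a flat segment); with it, the dual detour becomes unnecessary.

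The paper's second case stays entirely in the potential problem: since each $\Gamma_e$ is strictly convex and (for connected $\G$) the only flat direction of $x\mapsto\Gamma(\E^Tx)$ is $\mathbbm{1}_{|\V|}$, the function $F(x)=K^\star(x)+\Gamma(\E^Tx)$ is strictly convex on every affine slice $\mathcal{A}_d=\{x:\mathbbm{1}_{|\V|}^Tx=d\}$ and hence has a unique minimizer there. Taking $d=\mathbbm{1}_{|\V|}^T\mathrm y$, both $\mathrm y$ and $P_\psi\mathrm y$ are global minimizers of $F$ lying in $\mathcal{A}_d$, because a permutation preserves the coordinate sum; therefore they coincide. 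Your closing ``kernel argument'' is morally this same observation ($P_\psi$ preserves $\mathbbm{1}^Tx$ and $\ker\E^T$ is the consensus direction), but deployed only after the flow problem has introduced the differentiability obstruction. To salvage your route you would first have to prove strict monotonicity of $\gamma_e$; at that point the equality conditions in convexity at the midpoint already force $\E^T\mathrm y=\E^TP_\psi\mathrm y$ directly in the primal, with no mention of $\hat\mu$.
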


\begin{proof}
We recall that output-strictly MEIP systems have strictly monotone input-output steady-state relations \cite{Burger2014}, and that a convex function is strictly convex $\mathbb{R}\to\mathbb{R}$ if and only if its subdifferential is a strictly monotone relation \cite{Rockafeller1997}. Moreover, we recall that if $F$ is a strictly convex function defined on the affine subspace $\{x\in\mathbb{R}^n:\ Ax=b\}$ for some matrix $A$ and vector $b$, then it has a unique minimum \cite{Rockafeller1997}.

Suppose that Assumption \ref{assump.1} holds. Then $K_i$ are all strictly convex, and $\Gamma_e$ are all convex. Thus the function $F(\mathrm x) = K^\star(\mathrm x) + \Gamma(\E^T \mathrm x)$ is strictly convex, meaning it has a unique minimum, which is $\mathrm y$ by Theorem \ref{thm.Analysis}. Thus, because $P_\psi \mathrm y$ is also a minimizer of $F$, we conclude that $P_\psi \mathrm y = \mathrm y$.

Alternatively, suppose that Assumption \ref{assump.2} holds. In that case, the functions $K_i$ are convex and $\Gamma_e$ are strictly convex. Thus $F$ is strictly convex only in directions orthogonal to the consensus line $\mathrm{span}\{\mathbbm{1}_{|\V|}\}$, meaning that there could be more than one minimizer. However, we note that for any $d \in \mathbb{R}$, the function $F$ is strictly convex on the affine subspace $\mathcal{A}_d = \{x\in\mathbb{R}^{|\V|} | \mathbbm{1}_{|\V|}^Tx = d\}$, meaning that $F$ has a unique minimizer on each of these affine subspaces. Choose $d = \mathrm y^T \mathbbm{1}_{|\V|}$, so that $\mathrm y \in \mathcal{A}_d$.  Because $\mathrm y$ is a minimizer of $F$ on all of $\mathbb{R}^{|\V|}$, its also a minimizer on $\mathcal{A}_d$, making it the unique minimizer of $F$ on $\mathcal{A}_d$. Noting that $P_\psi \mathrm y$ is also a minimizer of $F$, and that $\mathbbm{1}_{|\V|}^T \mathrm y = \mathbbm{1}_{|\V|} P_\psi \mathrm{y}$, we get that $\mathrm y = P_\psi \mathrm y$.
\end{proof}

The theorem shows that the system converges to a steady-state $\mathrm y$ invariant under weak automorphisms. We want to restate it in a manner emphasizing the clustering that occurs. For that, we define the notion of exchangeability
\begin{defn}
We say that two agents $i,j\in \V$ are \emph{exchangeable} if there exists a weak automorphism $\psi \in \Aut(\G,\Sigma,\Pi)$ such that $\psi(i) = j$. We define the \emph{exchangeability graph} of the diffusively-coupled system $(\G,\Sigma,\Pi)$ as the graph $\HH = \HH(\G,\Sigma,\Pi) = (\V,\EE_\HH)$, where there is an edge $\{i,j\} \in \EE_\HH$ if $i$ and $j$ are exchangeable.
\end{defn} 

\begin{prop} \label{prop.EquivalenceRelation}
The exchangeability graph $\HH = \HH(\G,\Sigma,\Pi)$ is a union of disjoint cliques.
\end{prop}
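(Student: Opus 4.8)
The plan is to prove the equivalent statement that \emph{exchangeability} is an equivalence relation on $\V$. A simple graph on vertex set $\V$ is a disjoint union of cliques precisely when its adjacency relation, made reflexive, is transitive (equivalently, it contains no induced path on three vertices); so once I establish that exchangeability is reflexive, symmetric, and transitive, the connected components of $\HH$ are exactly the equivalence classes, each inducing a complete subgraph, with no edges running between distinct classes. This yields the disjoint-clique decomposition directly.

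First I would observe that all three properties follow from the fact, already noted in the text, that $\Aut(\G,\Sigma,\Pi)$ is a group (a subgroup of $\Aut(\G)$). For reflexivity, the identity map is a weak automorphism fixing every vertex, so each $i$ is exchangeable with itself. For symmetry, if $\psi\in\Aut(\G,\Sigma,\Pi)$ satisfies $\psi(i)=j$, then $\psi^{-1}$ is again a weak automorphism (closure under inverses) and $\psi^{-1}(j)=i$, so $j$ is exchangeable with $i$. For transitivity, if $\psi(i)=j$ and $\phi(j)=\ell$ for weak automorphisms $\psi,\phi$, then the composition $\phi\circ\psi$ is a weak automorphism (closure under composition) with $(\phi\circ\psi)(i)=\ell$, so $i$ is exchangeable with $\ell$.

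Having verified the equivalence-relation axioms, I would conclude by partitioning $\V$ into exchangeability classes. By the definition of $\HH$, two distinct vertices are joined by an edge if and only if they are exchangeable, i.e.\ if and only if they lie in the same class; hence each class induces a clique and there are no cross-class edges, which is exactly the assertion that $\HH$ is a union of disjoint cliques.

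There is no genuine obstacle here: the entire argument rests on the group structure of $\Aut(\G,\Sigma,\Pi)$, which the paper has already asserted. The only point deserving a moment's care is confirming that the defining conditions of a weak automorphism (being a graph automorphism, weak equivalence of corresponding agents and controllers, and, when Assumption \ref{assump.3} fails, preservation of edge orientation) are each stable under inversion and composition, so that the group closure is legitimate. These checks are immediate, since weak equivalence is itself an equivalence relation on systems, and both the graph-automorphism property and orientation preservation are manifestly closed under composing and inverting permutations.
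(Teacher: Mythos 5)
Your proof is correct and takes essentially the same route as the paper's: both arguments rest on the group structure of $\Aut(\G,\Sigma,\Pi)$, with the paper composing the weak automorphisms along a path in $\HH$ to produce a direct edge (i.e.\ establishing transitivity), which is exactly your key step. Your version additionally spells out reflexivity and symmetry, which the paper uses only implicitly.
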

\begin{proof}
It's enough to show that if there is a path between vertices $i,j$, then there is an edge $\{i,j\}$. Let $i,j$ be any two vertices, and suppose that there is a path $i=v_0, v_1,v_2,...,v_{k-1},v_k = j$ in $\HH$. By definition, there are weak automorphisms $\psi_0,...,\psi_{k-1}$ such that $v_r = \psi_{r-1}(v_{r-1})$ for any $r=1,2,...,k$. Because $\Aut(\G,\Sigma,\Pi)$ is a group, the composed map $\psi_{k-1}\psi_{k-2}\cdots\psi_1\psi_0$ is also a weak automorphism, and naturally, it maps $i$ to $j$. Thus the edge $\{i,j\}$ exists in the graph $\HH$, completing the proof.
\end{proof}

\begin{exam}\label{exam.Exchangeability}
Consider the graph $\G$ in Fig. \ref{fig.SynthesisGraph}, where nodes $1,3,4,5$ are all LTI with transfer function $G(s) = \frac{1}{s+1}$, and node $2$ is LTI with transfer function $G(s) = \frac{s}{2s+1}$. All edge controllers are static, having the form $\mu_e = \zeta_e$. We compute the exchangeability graph $\HH$ of the diffusively coupled system.
Suppose $\psi$ is an automorphism of $\G$. Then $\psi$ preserves the degree of each vertex. Thus, the sets $\{1,2\}$ and $\{3,4,5\}$ are all invariant under $\psi$. Moreover, $\psi$ cannot map $1$ to $2$, or vice versa, as the agents are not weakly equivalent. Furthermore, the map $\psi$ mapping $1\to 1, 2 \to 2$, and $3\to 4\to 5\to 3$ is a weak automorphism of the diffusively-coupled system. Thus the exchangeability graph $\HH$ contains the edges $\{3,4\},\{4,5\}$ and $\{5,3\}$. As we showed that agents $1$ and $2$ must remain invariant under weak automorphism, no more edges exist in the exchangeability graph $\HH$, so it is the union of three cliques - $\{1\},\{2\}$ and $\{3,4,5\}$. The graph can be seen in Fig. \ref{fig.ExchangeabilityExample}.
\end{exam}

\begin{figure}[b!]

\begin{center}
	\subfigure[Underlying Graph $\G$] {\scalebox{.33}{\includegraphics{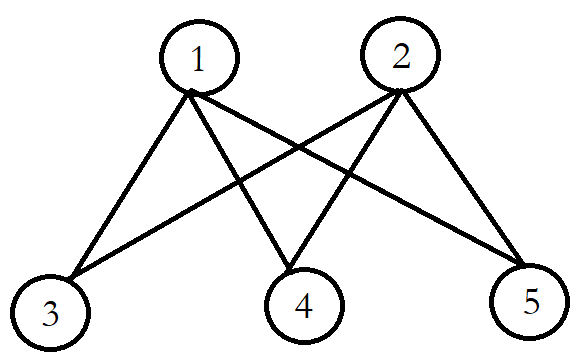}}\label{fig.SynthesisGraph}}
	\subfigure[The Exchangability Graph $\HH$]{\scalebox{.45}{\includegraphics{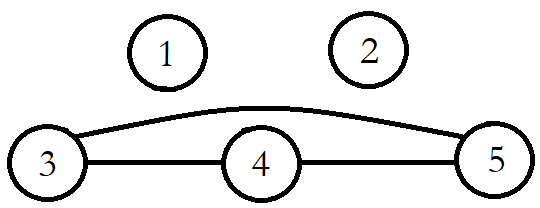}}\label{fig.ExchangeabilityExample}}
  \caption{Graphs for Example \ref{exam.Exchangeability}}
\end{center}

\end{figure}

We can now restate Theorem \ref{thm.Symmetry} in a more profound way:

\begin{thm}\label{thm.3}
Consider the diffusively-coupled system $(\G,\Sigma,\Pi)$, and suppose that either Assumption \ref{assump.1} or Assumption \ref{assump.2} hold. Then the system converges to a clustering steady-state, with clusters corresponding to the connected components of the exchangeability graph $\HH(\G,\Sigma,\Pi)$.
\end{thm}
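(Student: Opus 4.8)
The plan is to treat this theorem as a repackaging of Theorem~\ref{thm.Symmetry} in the language of the exchangeability graph, so most of the analytic work is already done. First I would invoke Theorem~\ref{thm.Analysis} to guarantee that, under Assumption~\ref{assump.1} or~\ref{assump.2}, the closed-loop signals converge to a well-defined steady-state output $\mathrm y$. This is what makes the word ``converges'' in the statement meaningful and requires nothing beyond what is already available.

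Next I would unwind the coordinate action of $P_\psi$. From the convention fixed in the proof of Proposition~\ref{prop.Commutation}, where $(P_\psi)_{ik} = \delta_{\psi(i),k}$, one reads off $(P_\psi \mathrm y)_i = \mathrm y_{\psi(i)}$ for every $i \in \V$. Theorem~\ref{thm.Symmetry} asserts $P_\psi \mathrm y = \mathrm y$ for every weak automorphism $\psi \in \Aut(\G,\Sigma,\Pi)$, so comparing coordinates yields the pointwise identity $\mathrm y_{\psi(i)} = \mathrm y_i$ for all $i \in \V$ and all $\psi$.

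The remaining step is purely combinatorial. By the definition of the exchangeability graph, two agents $i$ and $j$ lie in the same connected component of $\HH$ precisely when they are exchangeable, and Proposition~\ref{prop.EquivalenceRelation} shows each component is a clique, so even when $i$ and $j$ are merely joined by a path there is a single weak automorphism $\psi$ with $\psi(i)=j$. (Alternatively one may chain the identity $\mathrm y_{\psi(i)}=\mathrm y_i$ edge by edge along the path, avoiding Proposition~\ref{prop.EquivalenceRelation} altogether.) Feeding such a $\psi$ into the identity above gives $\mathrm y_i = \mathrm y_j$. Hence every connected component of $\HH$ is forced into a common output value, i.e.\ the agents synchronize along the partition induced by $\HH$, which is the claimed clustering.

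I expect no genuine obstacle here, since the analytic content sits entirely in Theorem~\ref{thm.Symmetry} and its use of strong convexity. The one point demanding care is the logical direction of ``clusters corresponding to'' the components: the argument establishes only that agents within a single component \emph{must} synchronize, so the components refine (or coincide with) the true clusters; it does not guarantee that agents in distinct components take distinct values. I would therefore phrase the conclusion as ``each cluster is a union of connected components of $\HH$,'' noting that strict separation between components is generic but not universal — two different cliques may coincidentally share a steady-state value, as indeed happens in the consensus case discussed after Assumption~\ref{assump.3}.
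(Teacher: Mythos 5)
Your proposal is correct and follows essentially the same route as the paper: convergence from Theorem~\ref{thm.Analysis}, invariance $P_\psi \mathrm y = \mathrm y$ from Theorem~\ref{thm.Symmetry}, and Proposition~\ref{prop.EquivalenceRelation} to upgrade a path in $\HH$ to a single weak automorphism sending $i$ to $j$, whence $\mathrm y_i = \mathrm y_j$. Your closing caveat --- that the argument only shows each connected component of $\HH$ synchronizes, not that distinct components attain distinct values --- is a fair and accurate qualification that the paper's own proof leaves implicit.
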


\begin{proof}
The system converges to some steady-state $\mathrm y$ by Theorem \ref{thm.Analysis}. By Theorem \ref{thm.Symmetry}, The steady-state $\mathrm y$ is invariant to all weak-automorphisms. If we take any two vertices $\{i,j\}$ lying in the same connected component of the exchangeability graph $\HH$, then by Proposition \ref{prop.EquivalenceRelation}, the edge $\{i,j\}$ is in $\HH$. Thus there is an automorphism $\psi$ such that $\psi(i) = j$. Looking at the components of the equation $P_\psi \mathrm y = \mathrm y$ implies that $\mathrm y_i = \mathrm y_j$. In other words, we showed that the diffusively coupled system $(\G,\Sigma,\Pi)$ converges to a steady-state, and agents connected in the  exchangeability graph $\HH$ converge to the same limit. This completes the proof of the first part.
\end{proof}
\subsection{Homogeneous Networks and Cluster Synthesis} \label{subsec.Synthesis}
In many practical examples, we are dealing with a diffusively coupled network in which the agents are identical. Examples include neural networks, platooning, coupled oscillators, and other homogeneous swarms. Furthermore, in many practical scenarios we may desire to have all controllers in the system identical. This is the case where the agents have no identifiers like serial numbers. We can also try and use this frame to make clustering more robust - even if we use a wrong model for the controllers or the agents, we will still have clustering do to symmetry. It should be noted that designing controllers that force the system to cluster can be done by using the synthesis procedure appearing in \cite{Sharf2017,Sharf2018a}, but there is no guarantee that the achieved controllers will be (even weakly) homogeneous. We note that the built scheme allows us to tweak the notion of homogeneity:

\begin{defn}
A diffusively-coupled system is \emph{weakly homogeneous} if any two agents, and any two controllers, are weakly equivalent.
\end{defn}

As seen in Example \ref{exam.WeakEquiv}, weakly homogeneous systems can describe agents and controllers of many different kinds. Moreover, the notion of homogeneous networks allows us to study clustering using purely graph-theoretic and combinatorial methods. Indeed, we claim that weak automorphisms for $(\G,\Sigma,\Pi)$ are just graph automorphisms of $\G$.
\begin{prop}
Let $(\G,\Sigma,\Pi)$ be any weakly homogeneous diffusively coupled network. A map $\psi:\V\to \V$ is a weak automorphism of the system if and only if $\psi \in \Aut(\G)$.
\end{prop}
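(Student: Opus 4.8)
The plan is to prove the two implications separately. The forward implication is a tautology: the first requirement in the definition of a weak automorphism is that $\psi$ be an automorphism of $\G$, so any weak automorphism of $(\G,\Sigma,\Pi)$ lies in $\Aut(\G)$ with no appeal to homogeneity. All of the content is in the reverse implication, where weak homogeneity is exactly what promotes a bare graph automorphism to a weak automorphism.

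For the ``if'' direction I would fix $\psi \in \Aut(\G)$ and verify the defining bullets of a weak automorphism one at a time. The first bullet holds by assumption. The second and third bullets require that $\Sigma_i$ be weakly equivalent to $\Sigma_{\psi(i)}$ for each $i \in \V$ and that $\Pi_e$ be weakly equivalent to $\Pi_{\psi(e)}$ for each $e \in \EE$; both are immediate from weak homogeneity, since by definition any two agents, and any two controllers, share a common steady-state input-output relation. Thus the only bullet carrying any risk is the fourth.

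The step I expect to be the main obstacle is precisely this last condition: when Assumption \ref{assump.3} fails a weak automorphism must in addition preserve edge orientation, whereas a generic graph automorphism need not, the discrepancy being recorded by the diagonal sign matrix $D$ of Proposition \ref{prop.Commutation}. I would resolve it by separating the two regimes. Under Assumption \ref{assump.3} each $\Gamma_e$ is even, so the entries of $D$ are immaterial, the orientation bullet is vacuous, and every $\psi \in \Aut(\G)$ qualifies, giving the exact identification $\Aut(\G,\Sigma,\Pi) = \Aut(\G)$. Outside Assumption \ref{assump.3} the orientation bullet genuinely restricts the admissible maps to the orientation-preserving automorphisms (those for which $D = \Id_{|\V|}$ in Proposition \ref{prop.Commutation}); I would therefore make explicit which convention is in force, since this is the one place where care is needed to keep the stated biconditional literally true.
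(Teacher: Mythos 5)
Your proof is correct and takes essentially the same route as the paper, whose own proof is the one-line remark that the claim ``follows from the definition of a weak automorphism, and weak equivalence of agents and controllers''; you simply unpack that, with weak homogeneity supplying the second and third bullets. Your flag on the fourth bullet is well taken and is in fact sharper than the paper: when Assumption~\ref{assump.3} fails, a graph automorphism that reverses the orientation of some edge is \emph{not} a weak automorphism, so the stated biconditional is literally true only under Assumption~\ref{assump.3} (or after restricting to orientation-preserving elements of $\Aut(\G)$) --- a caveat the paper's proof silently elides even though the proposition is later invoked in the regime where Assumption~\ref{assump.3} does not hold.
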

\begin{proof}
Follows from the definition of a weak automorphism, and weak equivalence of agents and controllers.
\end{proof}

Assumption \ref{assump.3} has special significance for weakly homogeneous systems.
\begin{thm}
Suppose that $(\G,\Sigma,\Pi)$ is weakly homogeneous, and either Assumption \ref{assump.1} or Assumption \ref{assump.2} holds. If Assumption \ref{assump.3} holds, then the network converges to consensus.
\end{thm}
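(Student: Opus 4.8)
The plan is to combine the variational characterization of the steady state from Theorem \ref{thm.Analysis} with a projection-onto-consensus argument, rather than invoking Theorem \ref{thm.3} directly. The point is that the exchangeability graph of a weakly homogeneous system records only the graph automorphisms of $\G$, and these need not act transitively on $\V$ (for instance on a path the central vertex is fixed), so Theorem \ref{thm.3} alone yields clustering but not consensus. The extra leverage must come entirely from Assumption \ref{assump.3}, whose role is to pin the controller cost minimum at a zero edge-difference.

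First I would record the consequences of the hypotheses. By weak homogeneity every agent shares one dual function $K^\star_0$ and every controller one convex function $\Gamma_0$, so the steady-state objective of Theorem \ref{thm.Analysis} reads
\begin{align*}
F(\mathrm y) = \sum_{i\in\V} K^\star_0(\mathrm y_i) + \sum_{e\in\EE} \Gamma_0\big((\E^T\mathrm y)_e\big).
\end{align*}
Assumption \ref{assump.3} gives $\Gamma_0(x)=\Gamma_0(-x)$; being convex and even, $\Gamma_0$ attains its minimum at $0$. By Theorem \ref{thm.Analysis} the closed loop converges to some $\hat{\mathrm y}$ minimizing $F$.

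The heart of the argument is to project $\hat{\mathrm y}$ onto the consensus line. Set $\bar{\mathrm y} = |\V|^{-1}\big(\mathbbm 1_{|\V|}^T\hat{\mathrm y}\big)\mathbbm 1_{|\V|}$. Since $\E^T\mathbbm 1_{|\V|}=0$, the controller term of $F(\bar{\mathrm y})$ equals $|\EE|\,\Gamma_0(0)$, which is no larger than that of $F(\hat{\mathrm y})$ because $\Gamma_0$ is minimized at $0$; and by Jensen's inequality the agent term does not increase when every coordinate is replaced by their common average. Hence $F(\bar{\mathrm y})\le F(\hat{\mathrm y})$, so the consensus vector $\bar{\mathrm y}$ is itself a minimizer of $F$. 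To finish I would split on the assumption, mirroring the proof of Theorem \ref{thm.Symmetry}. Under Assumption \ref{assump.1} the agent dual $K^\star_0$ is strictly convex, so $F$ is strictly convex with a unique minimizer, forcing $\hat{\mathrm y}=\bar{\mathrm y}$, which is consensus. Under Assumption \ref{assump.2} the controller function $\Gamma_0$ is strictly convex with a unique minimum at $0$, so the equality $F(\bar{\mathrm y})=F(\hat{\mathrm y})$ forces each $\Gamma_0\big((\E^T\hat{\mathrm y})_e\big)=\Gamma_0(0)$, whence $(\E^T\hat{\mathrm y})_e=0$ for every edge; since $\G$ is connected this means $\hat{\mathrm y}\in\mathrm{span}\{\mathbbm 1_{|\V|}\}$.

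The main obstacle I anticipate is conceptual rather than computational: verifying that both halves of $F$ genuinely fail to increase under the consensus projection \emph{simultaneously}, and then extracting equality-forced consensus using the correct strict-convexity ingredient supplied by each assumption — strictness of the agent term under Assumption \ref{assump.1} versus strictness of the controller term under Assumption \ref{assump.2}. One should also note that existence of a minimizer is guaranteed by Theorem \ref{thm.Analysis}, and that connectedness of $\G$ is used exactly where $\ker\E^T=\mathrm{span}\{\mathbbm 1_{|\V|}\}$ is invoked; this is also precisely the place where dropping Assumption \ref{assump.3} would let $\Gamma_0$ be minimized at a nonzero offset and thereby permit genuine clustering.
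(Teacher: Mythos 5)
Your proof is correct, and it takes a genuinely different (though related) route from the paper's. The paper argues constructively: it lets $\beta$ be the minimizer of the common agent cost, observes that $\beta\mathbbm{1}_{|\V|}$ minimizes the agent term and (by Assumption \ref{assump.3}, which forces $0\in\gamma(0)$ and hence $\Gamma$ minimized at zero edge-difference) the controller term \emph{separately}, so $\beta\mathbbm{1}_{|\V|}$ minimizes the OPP objective; it then invokes strict convexity orthogonal to the consensus line to conclude every minimizer is a consensus point. You instead start from the actual steady state $\hat{\mathrm y}$ guaranteed by Theorem \ref{thm.Analysis}, project it onto $\mathrm{span}\{\mathbbm{1}_{|\V|}\}$, and show via Jensen's inequality (agent term) and evenness of $\Gamma_0$ (controller term) that the projection is again a minimizer, after which the same strict-convexity dichotomy forces $\hat{\mathrm y}=\bar{\mathrm y}$. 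The two approaches buy slightly different things: the paper's is shorter, but it tacitly assumes the common agent cost attains its minimum at some finite $\beta$, which is not guaranteed for a general convex function; your projection argument sidesteps this entirely, needing only convexity of $K_0^\star$ and the existence of the minimizer already supplied by Theorem \ref{thm.Analysis}. Your case split is also more explicit than the paper's one-line conclusion, correctly isolating which term supplies strictness under each assumption and where connectedness of $\G$ (i.e., $\ker\E^T=\mathrm{span}\{\mathbbm{1}_{|\V|}\}$) enters under Assumption \ref{assump.2}. Both arguments rely on the same two essential facts --- that Assumption \ref{assump.3} pins the controller cost minimum at zero and that some strict convexity survives transverse to the consensus line --- so the underlying mechanism is shared even though the executions differ.
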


\begin{proof}
Assumption \ref{assump.3} implies that $\gamma(0) = 0$, meaning that $\Gamma$ is minimized at $0$. If we let $\beta$ be $K_i=K_j$'s minimum.then $\mathrm y = \beta\mathbbm{1}_{|\V|}$ minimizes both $K(x)$ and $\Gamma(\E^Tx)$. Thus it minimizes (OPP), which is strictly convex in any direction orthogonal to the consensus line, completing the proof.
\end{proof}

Thus, as long as Assumption \ref{assump.3} does not hold, so consensus is not forced, clustering in weakly homogeneous systems can be understood in terms of the action of $\Aut(\G)$ on the graph $\G$. One interesting problem that can benefit from this framework is cluster synthesis. Namely, given fixed homogeneous agents, how can one design the interaction graph $\G$ and homogeneous controllers in order to achieve clustering with prescribed cluster sizes, at prescribed values. We give an example of a cluster synthesis problem in Section \ref{sec.Simulations}. However, it should be noted that the solution to these problems is not unique. For example, both the complete graph and cycle graph work when we want a single cluster, and there are many other solutions, such as Cayley graphs on finite groups \cite{Cayley1878}.

\section{Case Studies}\label{sec.Simulations}

\subsection{A Weakly Homogeneous Non-Homogeneous Network}
We consider a cycle graph $\G$ on $5$ nodes. The agents' models are given by $\Upsilon_2,\Upsilon_3,\Upsilon_4,\Upsilon_5,\Upsilon_6$ of Example \ref{exam.WeakEquiv}, where we add an identical random \emph{constant} exogenous input to all agents to avoid the mundane case of convergence to $\mathrm y = 0$. All the controllers on the edges are modeled as $\Upsilon_1$ of the same example. Obviously, this is a non-homogeneous weakly homogeneous network. Furthermore, the automorphism group $\Aut(G)$ can map any vertex in $\G$ to any other vertex, meaning that Theorem \ref{thm.3} implies that the system should converge to consensus. The output $y(t)$ and the relative output $\zeta(t)$ of the closed-loop system can be seen in Figure \ref{Fig.WeaklyHomogenous}. It is evident that the system indeed converges to consensus, up to numerical errors due to limited precision.

\begin{figure} [!t] 
    \centering

    \includegraphics[scale=0.3]{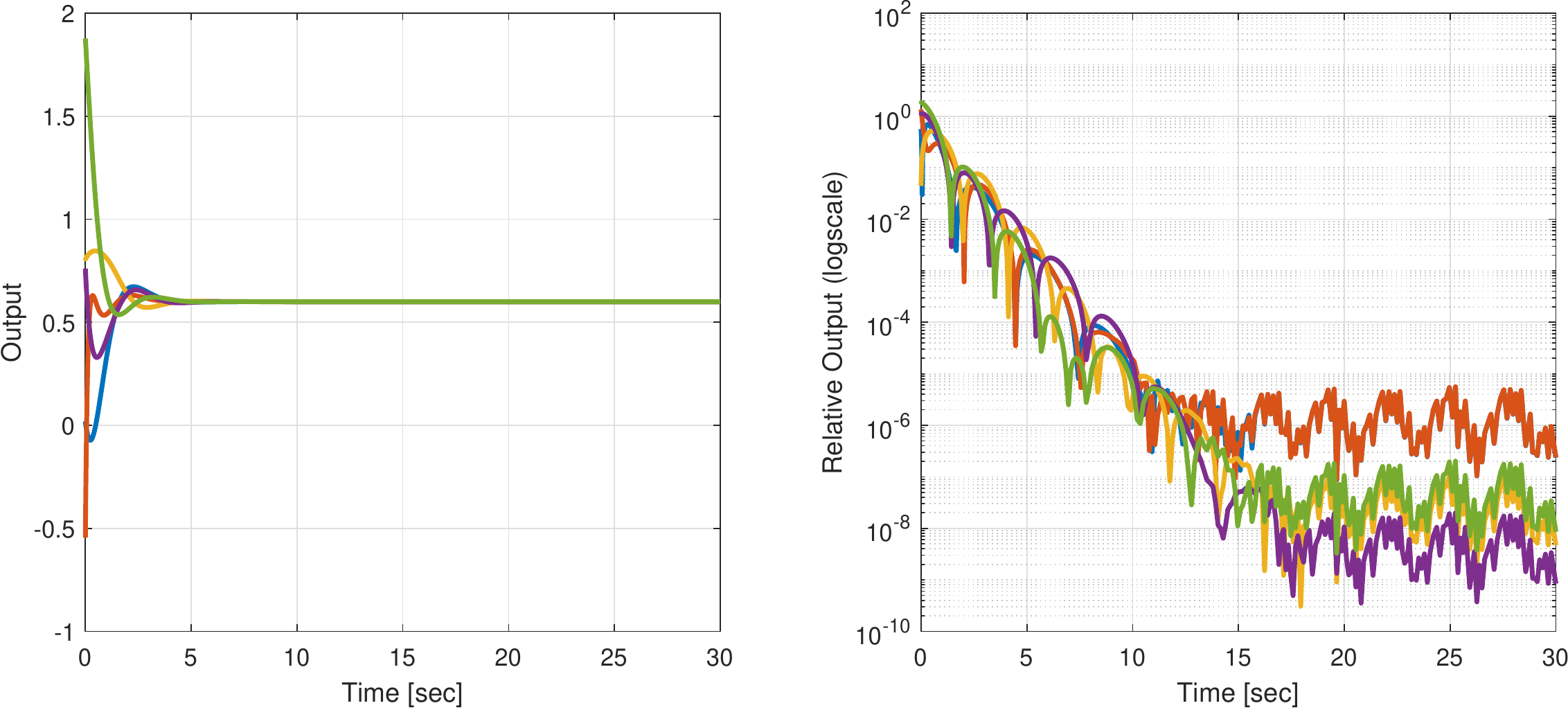}
    
    \caption{Output $y(t)$ and Relative Output $\zeta(t)$ for a Weakly Homogeneous Non-Homogeneous Network}
    \label{Fig.WeaklyHomogenous}
    
\end{figure}

\subsection{Cluster Synthesis - an Example}
We are given five agents, all are LTI with the TF $G(s) = \frac{1}{s+1}$. We wish to find a $\G$, and a collection of homogeneous controllers, such that $(\G,\Sigma,\Pi)$ converges to two clusters, one with two agents and one with three agents. The first cluster should be located at $\mathrm y=1$, and the second at $\mathrm y =0$.

First, according to the Discussion at Subsection \ref{subsec.Synthesis}, we want to find a graph $\G$, having five vertices, so that vertices $1,2$ are exchangeable, vertices $3,4,5$ are exchangeable. We consider the graph in Figure \ref{fig.SynthesisGraph}. Obviously, vertices $1,2$ are exchangeable, and vertices $3,4,5$ are exchangeable as well (but not with $1$ and $2$). It can be shown that this is the graph having the minimal number of edges possessing this property. We orient the edges from $1,2$ to $3,4,5$.

Now for the controller synthesis procedure. As we know from \cite{Sharf2017}, not all vectors are available as steady-state outputs of a diffusively-coupled network with prescribed agents. This can be fixed by allowing an addition of an (identical) constant exogenous input to all agents \cite{Sharf2018a}, and the steady-state equation becomes $\mathrm w = k^{-1}(\mathrm y) + \E\gamma(\E^T \mathrm y)$ \cite{Sharf2018b}. Writing this equation in coordinates, we get two equations, one for vertices in the $1^\mathrm{st}$ cluster, having $\mathrm y_i = 0$, and another for vertices in the $2^\mathrm{nd}$ cluster, having $\mathrm y_ i = 1$. We get:\small
\begin{align*}
w = 0 - 3\gamma_1(1-0) = -3\gamma_1(1) \\
w = 1 + 2\gamma_1(1-0) = 1+2\gamma_1(1) 
\end{align*}\normalsize

where $\gamma_1$ is the steady-state input-output relation for each copy of the homogeneous controller. We recall that we also need monotonicity, so $\gamma_1$ must be monotone. One possible solution to this set of equations is $w = 0.6$ and $\gamma_1(x) = -1.2+x$, the latter realized by the controller $\mu_e = -1.2 + \zeta_e$. We simulate the closed-loop system with the prescribed agents and synthesized graph and controllers. The output of the system is available in Figure \ref{fig.SynthesisSimulation}. It is evident that our solution indeed solves the cluster synthesis problem.

\begin{figure} [!bt] 
    \centering

    \includegraphics[scale=0.45]{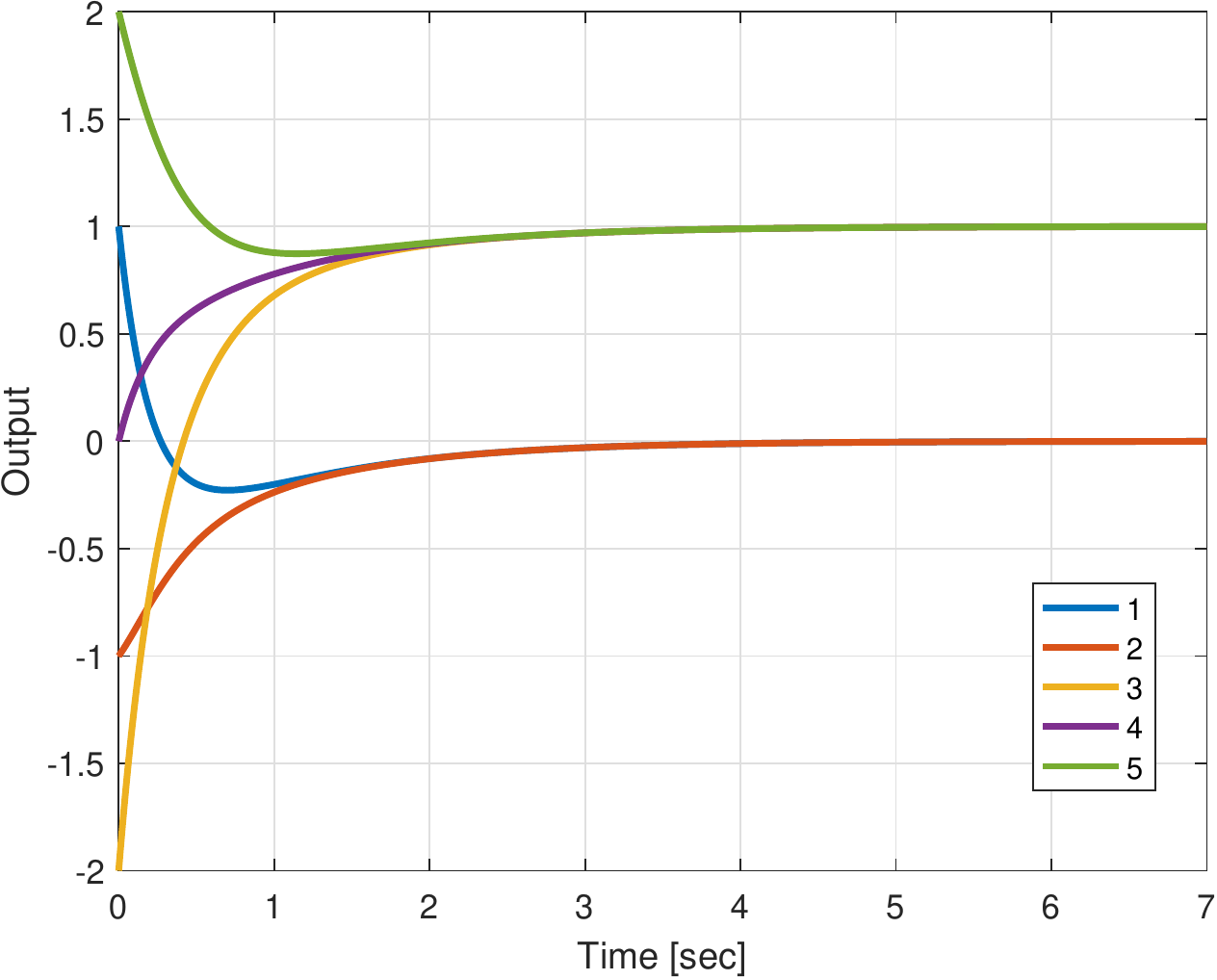}
    \caption{Cluster Synthesis - The Closed-Loop System.}
    \label{fig.SynthesisSimulation}
\end{figure} 
\section{Conclusion}

We presented the notion of weakly equivalent systems, and defined the weak automorphism group of a diffusively-coupled network. We showed that the set of all steady-state outputs of a diffusively-coupled network is invariant under weak automorphisms. Applied to networks of MEIP systems, we showed that networks satisfying either Assumption \ref{assump.1} or \ref{assump.2} must converge to a clustered steady-state output, with clusters corresponding to the connected components of the exchangeability graph. Later, we focused on weakly homogeneous networks, which are diffusively-coupled networks comprised of weakly equivalent agents and weakly equivalent controllers. We showed that the weak automorphism group of these systems is exactly the automorphism group of the underlying graph $\G$, and showed that if a weakly homogeneous network satisfies Assumption \ref{assump.3}, it converges to consensus. We discussed a possible application in synthesis of clusters in homogeneous networks. Lastly, we demonstrated the results in two different cases. Future research might seek a relaxed condition, requiring an even weaker notion of symmetry, as well as tackling the problem of cluster synthesis for homogeneous systems.

\bibliographystyle{ieeetr}
\bibliography{main}

\end{document}